\title{Computing Wasserstein Distance for Persistence Diagrams on a Quantum Computer}
\author[1]{Jesse J. Berwald}
\author[1]{Joel M.~Gottlieb}
\author[2,*]{Elizabeth Munch}
\affil[1]{D-Wave Systems, Inc., 3033 Beta Avenue, Burnaby, British Columbia, Canada V5G 4M9}
\affil[2]{Dept.~of Computational Mathematics, Science and Engineering; and Dept.~of Mathematics.
	Michigan State University, East Lansing, MI. }
	\affil[*]{Corresponding author, \url{muncheli@egr.msu.edu}}
\date{}
\begin{document}
\maketitle

\begin{abstract}
	Persistence diagrams are a useful tool from topological data analysis that provide a concise description of a filtered topological space.
	They are even more useful in practice because
	they come with a notion of a metric, the Wasserstein distance (closely related to but not the same as the homonymous metric from probability theory).
	Further, this metric provides a notion of stability; that is, small noise in the input causes at worst small differences in the output.
	In this paper, we show that the Wasserstein distance for persistence diagrams can be computed through quantum annealing.
	We provide a formulation of the problem as a {\em quadratic unconstrained binary optimization problem}, or QUBO, and prove correctness.
	Finally, we test our algorithm, exploring parameter choices and problem size capabilities, using a D-Wave 2000Q quantum computer.
\end{abstract}



\section{Introduction}

The field of Topological Data Analysis (TDA)
\cite{Munch2017,Oudot2017a,Carlsson2009,Ghrist2008, Otter2017, Edelsbrunner2010}
 has grown exponentially in recent years.
TDA consists of a suite of tools, derived from ideas in the mathematical field of topology, which can be used to find shape in data in a way that is quantifiable, comparable, robust, and concise.
One of the most prominent tools in the field is persistent homology \cite{Edelsbrunner2002,Zomorodian2004,Robins2000},
which encodes the changing homology of filtered topological space in a persistence diagram and can be used to understand the structure of the space.
One common assumption for the starting input data to compute persistence is a finite point cloud (really, a finite metric space) as in the example of \cref{fig:ExamplePersistence}, but many other types of input data can be used.
No matter the input data, the output persistence diagram is a collection of points in the plane above the diagonal, which can reveal the relative prominence of homological features in the data set.
Persistent homology has been successfully applied to data in many disparate domains, including but not limited to,
dynamical systems and time series analysis \cite{Colonnello2018,Khasawneh2015,Khasawneh2017,Khasawneh2018,Khasawneh2018a,Perea2015,Perea2016,Tralie2017, Emrani2014,Berwald2014a,Berwald2014b};
neuroscience \cite{Sizemore2018a,Giusti2016,Saggar2018,Giusti2015,Sizemore2018b};
plant biology \cite{Li2018, Chambers2018};
image processing \cite{Carlsson2007, Perea2014a,Asaad2017};
and genetics \cite{Chan2013,Emmett2014,Emmett2014a,Camara2016,Camara2016a}.

Persistent homology is particularly useful because it is \textit{stable}.
That is, there are metrics on persistence diagrams for which small variations in the input data result in quantifiably small variations in the persistence diagram output \cite{Cohen-Steiner2007,Cohen-Steiner2010}.
This is important for persistence on real data because all data comes with noise; stability means we can still trust the output persistence diagram with a reasonable degree of certainty.

Two closely related metrics for persistence diagrams have this stability property.
The first is the bottleneck distance  \cite{Cohen-Steiner2007} (and its generalization the interleaving distance \cite{Chazal2016,Lesnick2015}), and the second is the Wasserstein distance \cite{Cohen-Steiner2010}.
The main idea behind both is as follows.
Given two persistence diagrams, we want to move around points of the first to the configuration of the second with the minimal amount of work.
This is commonly called the earth mover's distance (EMD) \cite{Villani2009,Rubner2000}. The probabilist reading this will immediately note the parallels to the homonymous metric for probability distributions.
While the idea of the Wasserstein distance for persistence diagrams is clearly related to the EMD, the difference is that the EMD allows for mass splitting, while the Wasserstein distance does not.

The Wasserstein distance can be viewed as an $\ell_p$ type metric, as it determines work by summing $p^{\text{th}}$ powers of the distance to move each point; meanwhile the bottleneck is the $\ell_\infty$ analogue as it only considers work to be the farthest distance any point needs to be moved.
The difference in the definition of work leads to differences in computational methods for each (see \cite[Ch.~VIII.4]{Edelsbrunner2010} and \cite{Kerber2016}).
The difference also changes what sort of information is carried in the distance; namely, the Wasserstein distance accounts for all points so is more sensitive to noise, while the bottleneck distance simply sees global structure.

In this paper, we compute the Wasserstein distance for persistence diagrams using a D-Wave quantum computer.
A D-Wave quantum computer uses quantum annealing to solve problems,
a fundamentally different approach from gate-model quantum computers also being explored and developed in industry and academia. 
The gate model machines solve problems expressed in terms of quantum gates, as opposed to polynomials in binary variables.
Recent work has begun to find intersections between TDA and quantum computing, in particular giving methods for calculating Betti numbers using a gate model quantum computer \cite{Lloyd2016,Siopsis2018,Huang2018} and using a D-Wave quantum processor \cite{Dridi2015}.

The D-Wave quantum computer returns minimizing solutions of NP-hard problems formulated as QUBOs~\cite{Garey1979}.
Namely, a QUBO problem is one that minimizes a quadratic polynomial over binary variables.
Traditional computation of the Wasserstein distance requires solving a min-cost matching problem in a constructed bipartite graph \cite{Edelsbrunner2010,Kerber2016}; see \cref{fig:wass-example} for example.
In this paper, we  turn the standard bipartite graph representation into a QUBO (\cref{eq:H}) and show that a minimizing binary solution to this function can be interpreted as the required matching for the Wasserstein computation (\cref{thm:MaxMatchingMinimizesH}).
Finally, we test our algorithm, exploring parameter choices and problem size capabilities, using a D-Wave 2000Q quantum annealing computer. With the rapidly growing capabilities of quantum computers, we expect their performance to yield compelling improvements for tackling complex, and even NP-hard, problems.

\paragraph{Outline.} In \cref{sec:Bkgd}, we give an overview of necessary background for persistent homology, the Wasserstein distance, and the D-Wave quantum computer.
We present our QUBO in \cref{sec:method} and prove correctness in \cref{sec:Equivalence}.
In \cref{sec:experiments} we give the results of our experiments and discuss conclusions in \cref{sec:conclusion}.



\section{Background}
\label{sec:Bkgd}

\subsection{Persistent Homology}

Persistent homology \cite{Edelsbrunner2002,Zomorodian2004,Oudot2017a}, one of the most predominant methods arising from the field of TDA, is based on the following idea.
Given a filtered topological space (likely a simplicial or cubical complex for computational purposes)
\begin{equation*}
\emptyset \subseteq K_1 \subseteq K_2 \subseteq \cdots \subseteq K_n = K,
\end{equation*}
functoriality of homology gives a persistence module
\begin{equation*}
  \begin{tikzcd}
0 \ar[r] & H_*(K_1) \ar[r] & H_*(K_2)  \ar[r] & \cdots \ar[r] & H_*(K_n) = H_*(K).
  \end{tikzcd}
\end{equation*}
This sequence of vector spaces and linear transformations\footnote{We assume homology is computed with coefficients in a field $k$. More often than not, $k = \Z_2$.} can then be studied to understand something about the original filtration.
In general, a (discrete) persistence module $\VV$ is a collection of vector spaces and linear transformations of the form
\begin{equation*}
  \begin{tikzcd}
\VV = ( V_1 \ar[r,"\phi_1"] & V_2 \ar[r, "\phi_2"] & \cdots \ar[r, "\phi_{n-1}"] &V_n)
  \end{tikzcd}
\end{equation*}
with linear transformations $\phi_i^j$ for $i<j$ given by composition $\phi_{j-1} \phi_{j-2} \cdots\phi_i$.
An interval module $I_{[a,b)}$ is a persistence module for which
\begin{equation*}
  V_i =
\begin{cases}
k   & i \in [a,b)\\
0 & \text{else}
\end{cases}
\qquad
\phi_i^j =
\begin{cases}
id   & a \leq i \leq j < b\\
0 & \text{else}.
\end{cases}
\end{equation*}
With reasonable assumptions on the structure,\footnote{See \cite{Bubenik2018} for a complete description.} a persistence module can be decomposed into interval modules
\begin{equation*}
  \VV \cong \bigoplus_{[a,b) \in \BB} I_{[a,b)}
\end{equation*}
where the decomposition may not be unique, but the collection $\BB$ is.
We visualize $\BB$ as a persistence diagram: each $[a,b)$ is drawn as the point $(a,b)$ in the plane $\R^2$.
As we always have $a < b$, we include the diagonal $\Delta = \{(c,c) \mid c \in \R\}$ when drawing a persistence diagram.
We say that a point $[a,b)$ in a persistence diagram represents a feature that is born at $a$, dies entering $b$, and has lifetime $b-a$.

Points in the persistence diagram far from the diagonal represent a homology class that appeared early in the filtration, and stayed for a long time relative to the length of the filtration.
For example, in \cref{fig:ExamplePersistence}, we start with a point cloud $P$ embedded in $\R^2$ and let $K_i$ be the union of balls of radius $r_i$, $\bigcup_{x \in P} B_{r_i}(x)$.%
\footnote{This is a mild simplification. In reality, computation is done with the Vietoris-Rips complex, a simplicial complex that has approximately the same topology as the union of disks.}
The 1-dimensional homology, which measures circular structures, is used to construct the persistence module, and the resulting persistence diagram is shown at right.
The two circular pieces of the point cloud are encoded in the two points in the persistence diagram that are far from the diagonal.
While there is an ongoing debate as to the ``right'' way to interpret importance of points far from the diagonal, it is clear that having a point close to the diagonal should be almost as if that point were not included at all (that is, if it became a degenerate interval $[a,a)$).
The metric used for persistence diagrams takes this interpretation into consideration.

\begin{figure}
  \includegraphics[width = .19\textwidth]{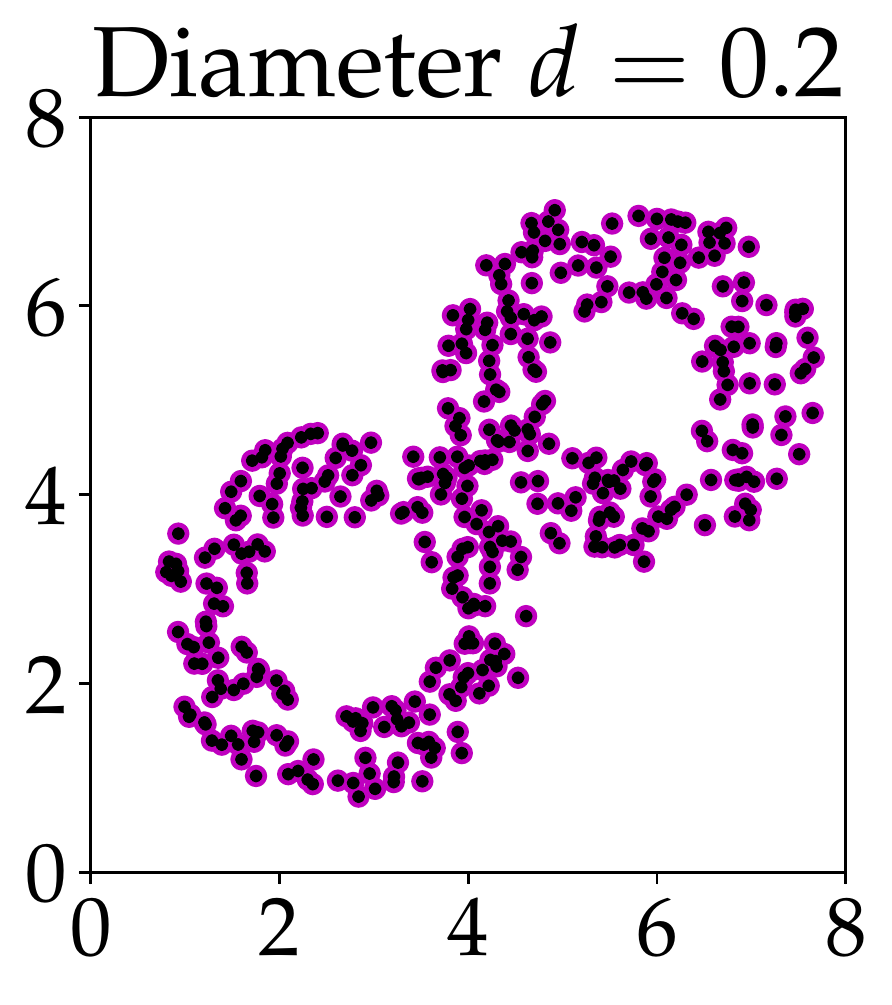}
  \includegraphics[width = .19\textwidth]{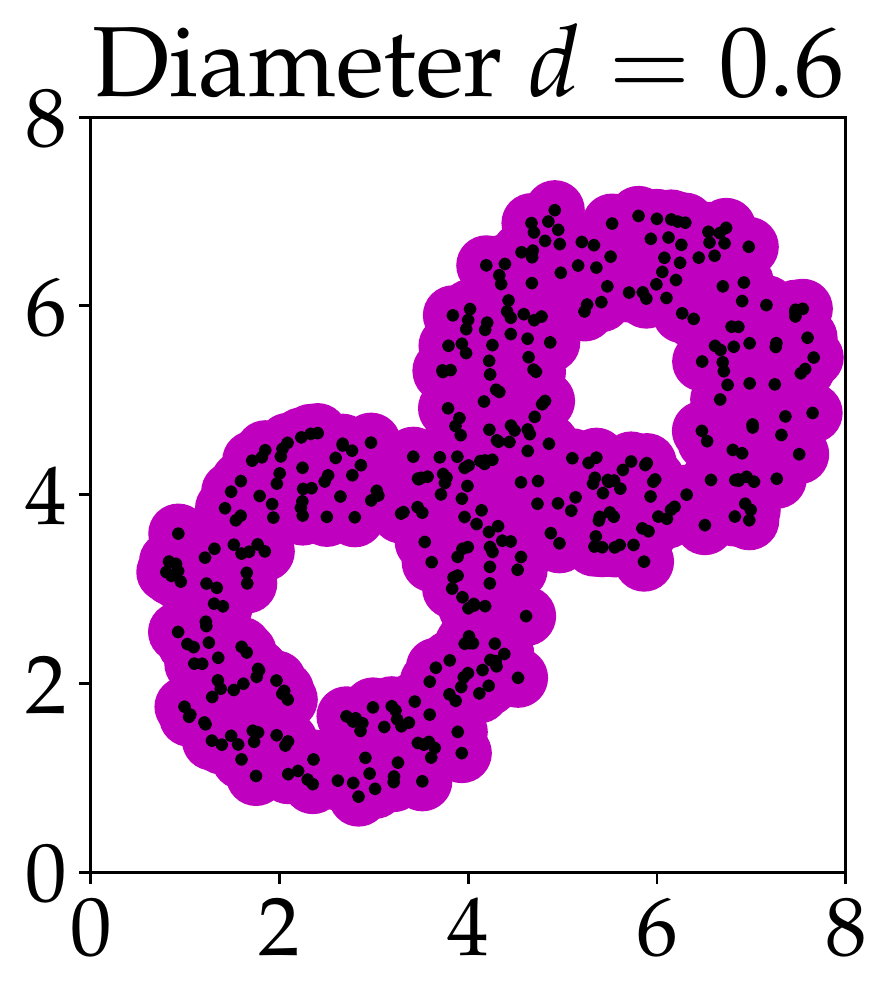}
  \includegraphics[width = .19\textwidth]{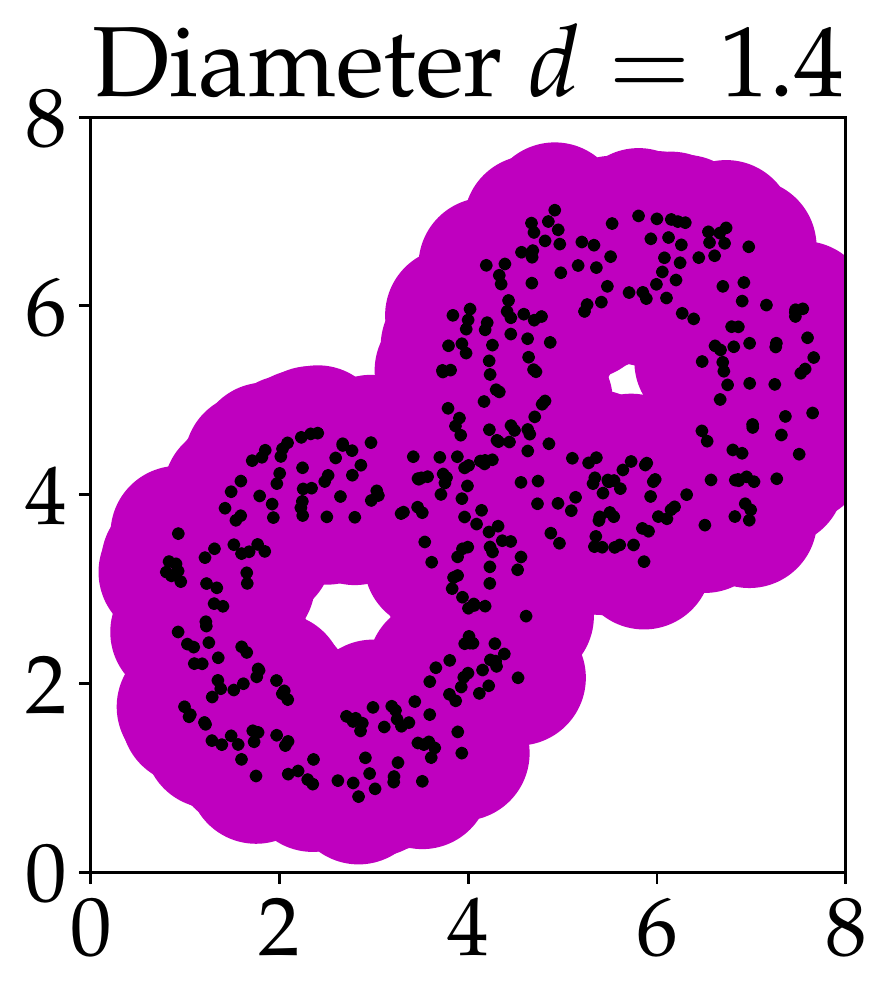}
  \includegraphics[width = .19\textwidth]{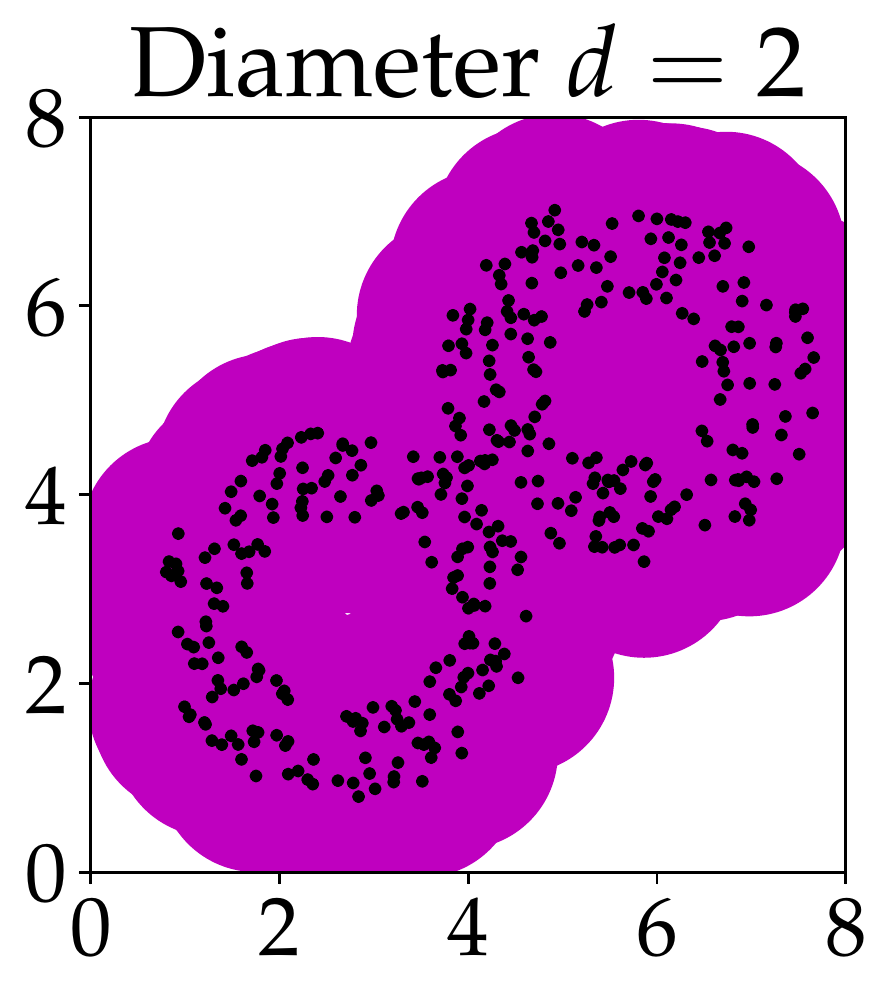}
  \includegraphics[width = .2\textwidth]{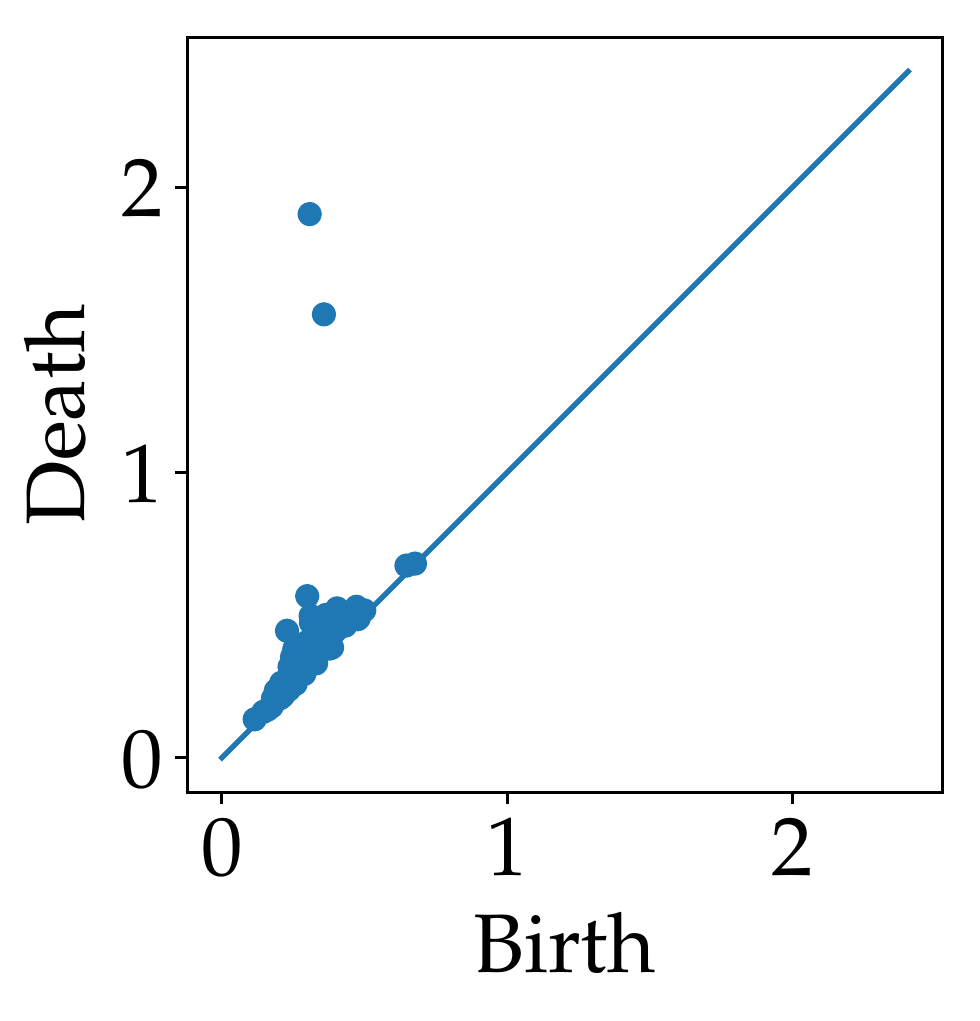}
  \caption{An example of the persistence diagram (right) for a filtration given by the union of disks with increasing radius.  The two points in the diagram far from the diagonal indicate that the point cloud seems to be drawn from a double annulus.}
  \label{fig:ExamplePersistence}
\end{figure}

\subsection{Wasserstein distance for persistence diagrams}

In theory, a persistence diagram is a finite multiset of points in $\R^2$ above the diagonal $\Delta = \{(c,c) \in \R^2\}$, along with a countably infinite set of copies of each point on the diagonal.
In practice, a persistence diagram is simply represented by the off-diagonal points as follows.
Let $\R^2_{>\Delta}$ be the portion of the plane above the diagonal, $\R^2_{>\Delta}=\{(c,d) \mid d > c\}$.
We will notate a diagram as $X = \{a_1,\cdots,a_n\}$ for $a_i \in \R^2_{> \Delta}$.

When considering a metric for these objects, it is important to take the meaning of the points into account; namely, that a point close to the diagonal $(c,c+\e)$ represents a feature that lived for a short time, $\e$.
A diagram with this small lifetime point should intuitively be close to the same diagram without that point, as if the feature had never appeared at all.
Hence, we consider the distance between two diagrams to be the minimal cost required to match up their points, either matching off-diagonal to off-diagonal, or off-diagonal to the nearest point on the diagonal, with respect to some cost function for the matching.
There are two closely related options for a metric on persistence diagrams: the bottleneck distance, and the $p^\text{th}$ Wasserstein distance ($p \geq 1$).
The main difference between them is simply the choice of cost function for a matching.
In this paper, we focus on the Wasserstein distance.

\begin{dfn}
The $p^{\text{th}}$ Wasserstein distance is defined to be
\begin{equation*}
d_p(X,Y) = \inf_{\phi:X \to Y} \left( \sum_{a \in X} \|a - \phi(a)\|_q^p \right)^{1/p}
\end{equation*}
where the infimum is taken over all bijections between $X$ and $Y$.
\end{dfn}
For historical reasons, $q=\infty$ in the majority of applications.
It is often more reasonable to set $q = p$ to control the geometry of the space of persistence diagrams \cite{Turner2014}, but we need no such assumption here.
We call
\begin{equation}
  \label{eq:WassCostBijection}
  \sum_{a \in X} \|a - \phi(a)\|^p_q
\end{equation}
 the $p$th Wasserstein cost function for the bijection $\phi$.

One should get mathematically nervous over where this infinite sum converges, however, it is easy to see that we always have a bijection with finite cost.
Namely, we associate all the off-diagonal points with their projection to the diagonal in the other diagram.
Then all but finitely many of the points on the diagonal $\Delta$ of one diagram will be matched to their counterparts in the diagonal of the other diagram, possibly with a Hilbert-hotel-style shift in index if a copy of that diagonal point had been matched to an off-diagonal point.
This may not be the best bijection, however, we can at least determine that $d_p(X,Y) < \infty$.

\subsection{Existing methods for computation}

From a computational perspective, one can provide the equivalent definition of Wasserstein distance through a best matching in a bipartite graph as follows.
First, recall that a \textit{matching} in a bipartite graph is a collection of edges $\x \subseteq E(G)$ such that each vertex is adjacent to at most one edge in $\x$; a \textit{perfect} matching is a matching such that each vertex is adjacent to exactly one edge in $\x$.
A \textit{maximal} matching is a matching that is not properly contained in a larger matching; that is, a matching $\x \subseteq E$ is maximal if there does not exist a matching $\y \subseteq E$ with $ \x \subsetneq \y$.
Note that in the case of a complete bipartite graph where the sizes of the vertex sets are the same, any maximal matching must be a perfect matching.

As a first attempt to compute the distance between diagrams $X = \{a_1,\cdots,a_n\}$ and $Y = \{ b_1,\cdots,b_m\}$, we construct the following complete  bipartite graph.
For the sake of notation, denote the projection of $a \in \R^2_{>\Delta}$ to the diagonal by $\Delta_a$; for $a = (c,d)$, this is $\Delta_a = ((c+d)/2, (c+d)/2)$.
For a set $S \subset \R^2_{>\Delta}$, let $\Delta_S = \{\Delta_a \mid a \in S\}$.
The weighted bipartite graph $G$ we build has vertex set $U \sqcup V$ with  $U = X \sqcup \Delta_Y$ and $V = Y \cup \Delta_X$.
Note that $|U| = |V| = n+m$.
We write $u \in X$ or $v \in Y$ for vertices representing off-diagonal points and $u \in \Delta$ for vertices representing points on the diagonal.
We will also often abuse notation by using $u$ or $v$ interchangeably to represent the vertex in the graph or the point in the diagram.

The weight for each edge $(u,v)$ is defined to be $\omega(u,v) = \|u-v\|_q^p$, the $p^\text{th}$ power of the distance  between the represented points in $\R^2$.
One particularly useful calculation is that the distance between a point and its projection to the diagonal, i.e.,~the weight of an edge of the form $(a,\Delta_a)$ for $a = (c,d)$ is $\left(\frac{d-c}{2^{1-1/q}}\right)^p$.
In the case that $q=\infty$, this becomes $\left(\frac{d-c}{2}\right)^p$.

The cost of a matching $\x \subseteq E$ (c.f.~\cref{eq:WassCostBijection}) is defined to be
\begin{equation}\label{eq:cost-p}
    C_p(\x) = \sum_{e \in \x} \omega(e) = \sum_{(u,v) \in \x} \|u-v\|_q^p.
\end{equation}
A minimum cost maximal matching is a maximal matching $\x$ such that $C(\x) \leq C(\x')$ for all maximal matchings $\x'$.
Note that because we are working with a complete bipartite graph, any maximal matching is also a perfect matching, so this could have been defined via perfect matchings; however, we will need this generality later.
By the Reduction Lemma \cite[Sec.~VIII.4]{Edelsbrunner2010}, if $C(\x)$ is the minimum cost of maximal matchings of the graph $G$, then $C(\x)^{1/p}$ is  the $p^{\text{th}}$ Wasserstein distance between the diagrams $X$ and $Y$.

There is a trick that can be employed for computation, namely to work with a smaller bipartite graph  than $G$.
Let $\widetilde G:= \widetilde G(X,Y)$ be a bipartite graph on the same sets $U$ and $V$ for which $\widetilde G |_{X \cup Y}$ is a complete bipartite graph, and the remaining edges are of the form $(z,\Delta_z)$ for $z \in X \cup Y$.
This means that $|E(\widetilde G)| = nm + n + m$.
We are no longer interested in perfect matchings, since the only such available matching in $\widetilde G$ is the one consisting of edges $\{(z, \Delta_z) \mid z \in X \cup Y\}$.
We do need a property of maximal matchings in $\tilde G$, which will be useful later.
\begin{lem}
    \label{lem:UnmatchedVertexInTildeG}
Any unmatched vertex in a maximal matching in $\widetilde G$ is $\Delta_z$ for some $z$.
\end{lem}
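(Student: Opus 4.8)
The plan is to argue by contradiction, exploiting the fact that every diagonal vertex in $\widetilde G$ has degree exactly one. The key structural observation I would isolate first is that, by construction, the only edge incident to a vertex $\Delta_z$ is the edge $(z,\Delta_z)$: all other edges of $\widetilde G$ run between the off-diagonal sets $X$ and $Y$. This holds whether $\Delta_z \in \Delta_X \subseteq V$ (for $z \in X$) or $\Delta_z \in \Delta_Y \subseteq U$ (for $z \in Y$). Consequently, a diagonal vertex can only ever be matched to its own off-diagonal partner.

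Now suppose, for contradiction, that some off-diagonal vertex $z \in X \cup Y$ is left unmatched by a maximal matching $\x$. I would examine its projection $\Delta_z$. Since the unique neighbor of $\Delta_z$ is $z$ itself, and $z$ is unmatched, the vertex $\Delta_z$ cannot be matched either: its only candidate matching edge is $(z,\Delta_z)$, which is necessarily unused because $z$ is free. Hence both endpoints of the edge $(z,\Delta_z)$ are unmatched in $\x$.

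But then $\x \cup \{(z,\Delta_z)\}$ is again a matching: neither $z$ nor $\Delta_z$ acquires a second incident edge, since both were previously unmatched, and no other vertex is affected. This enlarged set strictly contains $\x$, contradicting the maximality of $\x$. Therefore no off-diagonal vertex can be unmatched, so every unmatched vertex must be of the form $\Delta_z$, which is exactly the claim.

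I do not anticipate a genuine obstacle here; the argument becomes essentially immediate once the degree-one structure of the diagonal vertices is made explicit. The only point requiring care is the bookkeeping of which side of the bipartition each vertex lives on: I must confirm that the conclusion holds uniformly whether the unmatched off-diagonal vertex lies in $X$ (so that $\Delta_z \in V$) or in $Y$ (so that $\Delta_z \in U$). In both cases the edge $(z,\Delta_z)$ belongs to $\widetilde G$ and $\Delta_z$ has no other neighbor, so the two cases are perfectly symmetric and no separate treatment is needed.
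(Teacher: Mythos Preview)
Your proof is correct and follows essentially the same argument as the paper: assume an off-diagonal vertex is unmatched, observe that its diagonal partner has degree one and hence is also unmatched, and then add the connecting edge to contradict maximality. The paper's version is simply more terse, omitting the explicit discussion of which side of the bipartition the vertices lie on.
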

\begin{proof}
Let $\x \subseteq E$ be a maximal matching in $\tilde G$ where vertex $u \not \in \Delta$ is unmatched.
But then $\Delta_u$ is also unmatched as it has degree 1, so $\x$ can be increased by including the edge $(u,\Delta_u)$, contradicting maximality.
\end{proof}

We still look for a min-cost maximal matching in this reduced graph and use it to determine the Wasserstein distance.
That it is reasonable to use $\widetilde G$ instead of $G$ can be seen in the following lemma, which is an immediate consequence of \cite[Lem.~2.2]{Kerber2016}.

\begin{lem}[{\cite{Kerber2016}}]
  \label{lem:SmallerBipartiteGraph}
If $C(\x)$ is the cost of a minimum cost maximal matching in $\widetilde G$, then
$$d_p(X,Y) = C(\x)^{1/p}.$$
\end{lem}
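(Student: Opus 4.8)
The plan is to reduce the statement to a comparison between $G$ and $\widetilde G$. The Reduction Lemma already gives $d_p(X,Y)=C(\x)^{1/p}$ when $C(\x)$ is the minimum cost over maximal (equivalently, perfect) matchings of the full graph $G$, so it suffices to show that this minimum cost is unchanged on passing to $\widetilde G$. Writing $C^\ast_G$ and $C^\ast_{\widetilde G}$ for the two minimum costs, I will prove $C^\ast_G = C^\ast_{\widetilde G}$ and then compose with the Reduction Lemma. Throughout I use the convention, standard in this construction and implicit in the Reduction Lemma, that an edge joining two diagonal vertices carries weight $0$.

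For $C^\ast_G \le C^\ast_{\widetilde G}$ I would start from an optimal maximal matching $\x$ in $\widetilde G$. By \cref{lem:UnmatchedVertexInTildeG} every unmatched vertex has the form $\Delta_z$, so the unmatched vertices of $U$ lie in $\Delta_Y$ and those of $V$ lie in $\Delta_X$; since each matched edge covers one vertex on each side, the two sides leave the same number $k$ of vertices unmatched. As $G$ is complete bipartite, I can pair these leftover diagonal vertices by any bijection, adjoining $k$ diagonal-diagonal edges of weight $0$ to obtain a perfect matching of $G$ of exactly the same cost. Hence $C^\ast_G \le C(\x) = C^\ast_{\widetilde G}$.

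The reverse inequality is the substantive direction. Starting from an optimal perfect matching $M$ of $G$, I would rewrite every edge not already present in $\widetilde G$: each cross-projection edge $(a,\Delta_{a'})$ with $a' \ne a$ (and symmetrically each $(b,\Delta_{b'})$) is redirected to the own-projection edge $(a,\Delta_a)$, while every diagonal-diagonal edge is deleted. The geometric input is that $\Delta_a$ is the $\|\cdot\|_q$-closest diagonal point to $a$: minimizing $(|c-t|^q+|d-t|^q)^{1/q}$ over diagonal points $(t,t)$ yields $t=(c+d)/2$, in agreement with the recorded weight $((d-c)/2^{1-1/q})^p$. Thus each redirection can only decrease an edge weight, and the deleted edges had weight $0$, so the total cost does not increase. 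I then check that the outcome is genuinely a matching in $\widetilde G$: the projections $\{\Delta_a : a \in X\}$ are pairwise distinct, each $\Delta_a$ is used only by its own point $a$, and every off-diagonal vertex retains its unique incident edge, so no vertex is covered twice. Since every off-diagonal vertex remains matched, and every edge of $\widetilde G$ either joins two off-diagonal vertices or is an own-projection edge $(z,\Delta_z)$, no edge can be added and the matching is maximal in $\widetilde G$. Its cost is at most $C(M)=C^\ast_G$, giving $C^\ast_{\widetilde G} \le C^\ast_G$.

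The main obstacle is the bookkeeping in this reverse direction: confirming that redirecting \emph{all} diagonal-bound off-diagonal points to their own projections at once produces a valid matching, rather than two points competing for the same diagonal slot. This is exactly where it helps that distinct points have distinct projections, so that the redirection is a term-by-term, weight non-increasing substitution; no global rearrangement or Monge-type inequality is needed, and the zero-weight diagonal-diagonal edges harmlessly absorb whatever diagonal vertices are freed in the process.
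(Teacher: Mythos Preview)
The paper does not prove this lemma itself; it is stated as an immediate consequence of \cite[Lem.~2.2]{Kerber2016}. Your argument is correct and is essentially the standard one underlying that citation: extend an optimal maximal matching of $\widetilde G$ to a perfect matching of $G$ by adjoining zero-weight diagonal--diagonal edges, and conversely reroute each off-diagonal--to--diagonal edge of an optimal perfect matching of $G$ to the point's own projection (which cannot increase cost, since $\Delta_a$ is the $\|\cdot\|_q$-nearest diagonal point to $a$) and discard the diagonal--diagonal edges. Your explicit flag that the Reduction Lemma requires the zero-weight convention on diagonal--diagonal edges is appropriate; the paper's blanket formula $\omega(u,v)=\|u-v\|_q^p$ glosses over this.

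One small point of phrasing. The sentence ``distinct points have distinct projections'' is not literally true in $\R^2$: two points $a=(c,d)$ and $a'=(c',d')$ with $c+d=c'+d'$ share the same projection. What actually makes the simultaneous redirection collision-free is that the $\Delta_a$ are pairwise distinct \emph{as vertices} of the graph --- the paper builds $V=Y\cup\Delta_X$ with $|\Delta_X|=n$, one labelled copy per $a\in X$ --- and only $a$ is ever redirected to the vertex labelled $\Delta_a$. Your bookkeeping in the final paragraph is correct; only this justification should appeal to vertex labels rather than geometric distinctness.
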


In particular, \cref{lem:SmallerBipartiteGraph} implies that we can work with the smaller graph $\tilde G$, resulting in a size reduction of about half in practice \cite{Kerber2016}.
Once we have constructed the bipartite graph $G$ or $\widetilde G$, we can use any standard min-cost matching algorithm, i.e., the Hungarian algorithm \cite{Kuhn1955}, to find a min-cost maximal matching (MCMM).
In practice, the current state-of-the-art software is Hera \cite{Kerber2016}, which uses the fact that the points in the diagram are points in the plane in order to speed up computation of the Wasserstein distance.



\subsection{QUBOs and Quantum Annealing}
\label{sec:qubos}

The D-Wave quantum computer seeks minimum energy solutions to
 a combinatorial optimization problem known as a QUBO. This problem is mathematically
equivalent to the Ising problem in statistical mechanics, first posed by
Wilhelm Lenz in 1920 and solved in one dimension by Ernst Ising in his
1924 Ph.D.~thesis \cite{Ising1925,Pathria2012}. 
See \cref{tab:qubo-terms} for a list of terminology used in the field.

The D-Wave programmable quantum computer (QC) is comprised of a grid of
superconducting loops, each of which acts as a programmable flux qubit \cite{Harris2010}.
The loops may have current in either direction, corresponding to
up and down spins. The individual qubits are arranged in a grid
with couplers corresponding to controllable mutual inductances
between the magnetic fields associated with the current loops.
The grid is a bipartite graph also known as Chimera.

\begin{table}[tb]
  \centering
  \begin{tabular}{|l|p{11cm}|}
  \hline
  Term           & Definition \\
  \hline
  qubit          & {\em Quantum bit} that participates in annealing cycle and settles into one of two possible final states: \{0,1\} \\
  coupler        &Physical device that allows one {\em qubit to influence another qubit} \\
  weight or bias &  Real-valued {\em constant associated with each qubit}, which influences the qubit’s tendency to collapse into its two possible final states; controlled by the programmer \\
  strength       & Real-valued {\em constant associated with each coupler}, which controls the influence exerted by one qubit on another; controlled by the programmer \\
  objective      &Real-valued {\em function which is minimized} during the annealing cycle \\
  \hline
  \end{tabular}
  \caption{Terminology for quantum computing.}
  \label{tab:qubo-terms}
\end{table}

The quantum computer implements a process known as quantum annealing \cite{Kadowaki1998,Farhi2001}.
The system starts in a state described by the initial
quantum-mechanical Hamiltonian whose lowest energy state, or ground
state, is a superposition of all possible computational basis states.
The goal of the annealing process is to find the ground state
of a final Hamiltonian, specified by the user. The system evolves according
to the time-dependent Schr{\"o}dinger equation. The adiabatic theorem
\cite{Born1928}
says that if the time evolution of the system is slow
enough, then the system remains in its ground state.
Therefore, at the end of a slow annealing process, the final state will be the
ground state of the input Hamiltonian, which is also a global minimum of the
objective function.

During the process, the system samples from an approximate Boltzmann
distribution over the energy landscape defined by the Hamiltonian, the specific form of which is problem-dependent.
The system can be sampled many times once the problem has been translated onto the hardware.
An important aspect is that the system naturally samples from a probability distribution, and a user can obtain hundreds or thousands of samples in order to explore that distribution.
This will be seen, in particular, in our experiments where the QC returns distributions of solutions, rather than a single answer (see \cref{fig:wass-dists-all}).

A problem is first expressed in terms of binary variables with real coefficients where the highest power of variable appearing is two.
Expressing the problem as a QUBO often includes writing out constraints; e.g., turn on one and only one variable in a column in a matrix.
A 2014 paper by Andrew Lucas showed QUBO formulations for the 21 NP-hard problems explored by Richard Karp \cite{Lucas2014,Karp1972,Garey1979}.
QUBOs are found in many fields, including portfolio management, job-shop scheduling, and traffic engineering.

When problems require more connectivity than is available in the bipartite
Chimera graph, it is often possible to embed them onto the graph by forming
chains of qubits, in which the qubits are constrained to have the same value \cite{Cai2014,Choi2008a, Choi2010}. 
This idea will also be important to understand the results of our experiments in \cref{sec:experiments}.



\section{Method}
\label{sec:method}

We will turn our bipartite graph $\widetilde G$ into a QUBO so that the solution can be interpreted as a matching between the graphs, and the value of the minimized solution is equal to the $p^\text{th}$ power of the Wasserstein distance.
In particular, we need to build a QUBO that is minimized exactly when the answer both represents a MCMM while minimizing the sum of the edge weights.

Given diagrams $X = \{a_1,\cdots,a_n\}$ and $Y = \{ b_1,\cdots,b_m\}$, we have the bipartite graph $\widetilde G$ with vertex sets $U \cong X \cup \Delta_Y$ and $V \cong Y \cup \Delta_X$, so that $|U| = |V| = n+m$.
For $(u,v) \in E:= E(\widetilde G)$ adjacent, set $\omega(u,v) = \|u - v\|_q^p$.
Set
\begin{equation}
  \label{eq:DefB}
   B > B^* := \max_{(u,v) \in E(\widetilde G)} \omega(u,v) .
\end{equation}

We assume genericity of the points in $X$ and $Y$; that is, $\|a-b\| >0$ for all off-diagonal points $a \in X, b \in Y$.

We will build a QUBO on $M:= nm + n + m$ variables $\x = \{x_{u,v} \mid (u,v) \in E \}$.
First, note that values of $\x \in (\Z_2)^{nm + n + m}=: Z$ are in bijection with subsets of edges, $\{e \in E \mid x_{u,w} = 1\}$, so we abuse notation and write $\x \in Z$ or $\x \subseteq E$ depending on the context.
We also write, e.g., $u \in X \subseteq U$ for vertices associated with the off-diagonal points of the diagram $X$ and $u \in \Delta_Y \subseteq U$ for the vertices associated with the diagonal points of the diagram $Y$.
Set
\begin{align}
F_c(\mathbf{x})  & =  \sum_{(u,v) \in E} \omega(u,v) x_{u,v} \label{eq:Fc}  \\
F_{U}(\mathbf{x}) & = B \sum_{u \in X \subset U} \left( 1-\sum_{\substack{v \in V \\ (u,v) \in E}} x_{u,v}\right)^2 \label{eq:Fu} \\
F_V(\mathbf{x}) & = B \sum_{v \in Y \subset V} \left( 1-\sum_{\substack{u \in U \\ (u,v) \in E}} x_{u,v}\right)^2 \label{eq:Fv}
\end{align}
where $B$ is a non-negative, real-valued Lagrangian multiplier. Then we are interested in the QUBO
\begin{align}\label{eq:H}
    H = F_c + F_U + F_V.
\end{align}

As long as $\x$ represents a matching, $F_c(\x)$ is built to return the cost of the matching.
To see this, note that fixing $u \in X \subseteq U$, $\sum_{\substack{w \in W \\ (u,w) \in E}} x_{u,w}$ is simply the number of edges in $\x$ adjacent to $u$.
Thus, the term
$$\left( 1-\sum_{\substack{v \in V \\ (u,v) \in E}} x_{u,v}\right)^2$$
 is zero if vertex $u$ is adjacent to exactly one edge in $\x$, and strictly positive otherwise.
 If $\x$ is a maximal matching, and thus by \cref{lem:UnmatchedVertexInTildeG}, every non-diagonal vertex is adjacent to exactly one edge, then  $F_U$ and $F_V$ are built  so that  $F_U(\x) = F_V(\x) = 0$.
In particular, this means that for a maximal matching, $H(\x) = C(\x)$.

\subsection{An example}
\label{sec:example}

\begin{figure}
  \centering
  \includegraphics[width = .6\textwidth]{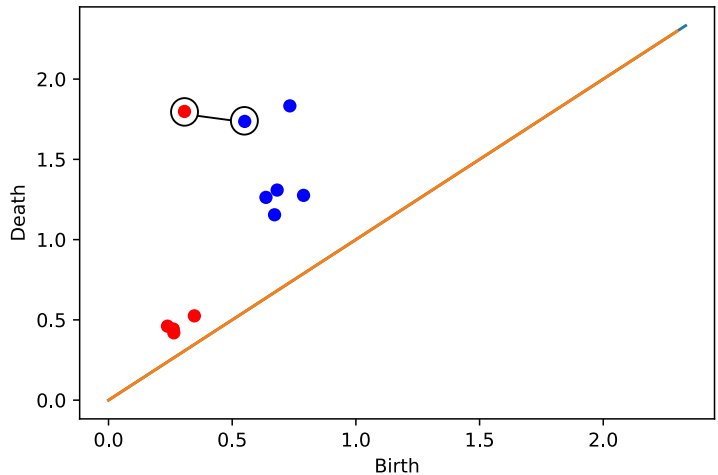}
  \caption{Two persistence diagrams overlaid, one from points sampled from a torus with two long-lived generators ({\color{blue}blue}) and the other from points sampled from an annulus with a single long-lived generator ({\color{red}red}). Some noisy (short-lived) generators have been truncated to simplify the example. The circled points and the edge between them highlights the pairing shown in \cref{fig:wass-graph-full} between $(5,1)$.}
  \label{fig:dgm-example}
\end{figure}

We present a brief example detailing the formulation of a bipartite graph from two persistence diagrams and construction of the resulting QUBO. Consider \cref{fig:dgm-example} showing two diagrams overlaid. The diagrams correspond to a noisy torus and a noisy annulus and have been truncated to reduce short-lived generators near the diagonal. In the example in \cref{fig:dgm-example}, diagram $X$ contains $n=6$ points (blue) and diagram $D_2$ contains $m=5$ points (red).

We first construct $\tilde G$ for this example in \cref{fig:wass-graph-full}.
On the left-hand side of $\tilde G$, blue circles (\coloredcircle[blue, fill=blue]{3pt}) represent the $n$ points from $X$.
On the right hand side, the red squares (\coloredbox{red}) correspond to the $m$ points from $Y$.
Each of $n$ nodes from $X$ has an edge to the opposing $m$ nodes from $Y$, and vice versa.
These pairings are represented by the edges (\coloredcircle[blue, fill=blue]{3pt}, \coloredbox{red}) in \cref{fig:wass-graph-full}.
In addition to edges between off-diagonal points, there are edges between each point and its projection onto the diagonal.
Nodes for diagonal points are situated on the opposing side and are indicated by $\Delta_{*}$'s.
For instance, vertex 0 in $X$, i.e., \coloredcircle[blue, fill=blue]{3pt} labeled 0, is paired to $\Delta_0$ on the right hand side of \cref{fig:wass-graph-full}.
The graph is not a fully-connected bipartite graph since $\Delta_{*}$'s pair only with their off-diagonal representative.
Thus, the graph $\widetilde{G}(X,Y)$ is composed of nodes $X = \{\coloredcircle[blue, fill=blue]{3pt}\} \cup \{{\color{blue}\blacktriangle}\}$ and $Y = \{\coloredbox{red}\} \cup \{{\color{red}\blacktriangle}\}$, with full connectivity between nodes representing off-diagonal and single connectivity between off-diagonal nodes and their diagonal projection representatives.

In \cref{fig:wass-graph-full}, one possible maximal matching is shown in bold.
Note that every non-diagonal node has degree exactly 1 in the matching; only diagonal nodes need not be paired as per \cref{lem:UnmatchedVertexInTildeG}.
The pairing of 5 and 0 in \cref{fig:wass-graph-full} is highlighted by the black circles connected by an edge in \cref{fig:dgm-example}; all other points are paired to their diagonal projection.
The Wasserstein distance for the matching is then the $\sum_{(u,v)}\omega(u,v)$ for all bold edges $(u,v)$.

From this graph, we construct the QUBO drawn in \cref{fig:QUBO-coefficients}.
In this matrix, each row represents the variable for one edge $x_{u,v}$ in the graph.
The color of the entry $(x_{u,v},x_{u',v'})$ in the upper triangular portion represents the coefficient of the monomial $x_{u,v}x_{u',v'}$.
Due to lexicographic sorting of variables in the code, these are sorted in the matrix as follows.
Denote the vertices on the left by $u_i$ and those on the right by $v_i$. Containment of vertex $\Delta_i$ is clear from context despite being duplicated. The resulting order of edges is
\begin{align*}
  (u_0,v_0), (u_0,v_1),&\cdots,(u_0,v_4),(u_0, \Delta_0), \\
  (u_1,v_0), (u_1,v_1)&\cdots, (u_1,v_4),(u_1,\Delta_1),\\
  &\phantom{x}\vdots\\
  (u_5,v_0), (u_5,v_1)&\cdots, (u_5,v_4), (u_5,\Delta_5),\\
  (\Delta_0,v_0), (\Delta_1&,v_1), \cdots (\Delta_4,v_4).
\end{align*}

\begin{figure}
  \centering
  \begin{subfigure}{0.45\textwidth}
    \includegraphics[width=\textwidth]{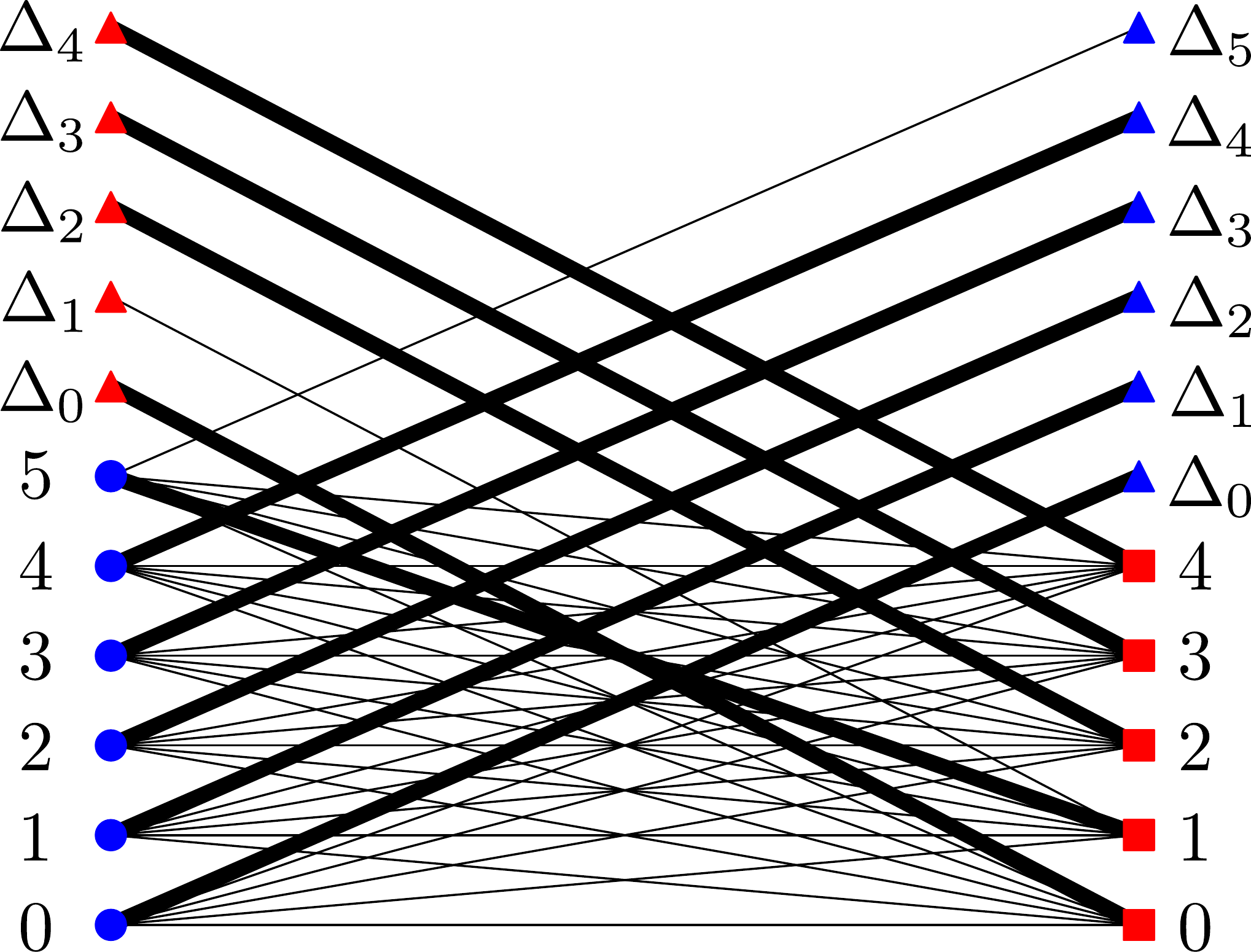}
    \caption{}
    \label{fig:wass-graph-full}
  \end{subfigure}
  \begin{subfigure}{0.45\textwidth}
    \includegraphics[width=\textwidth]{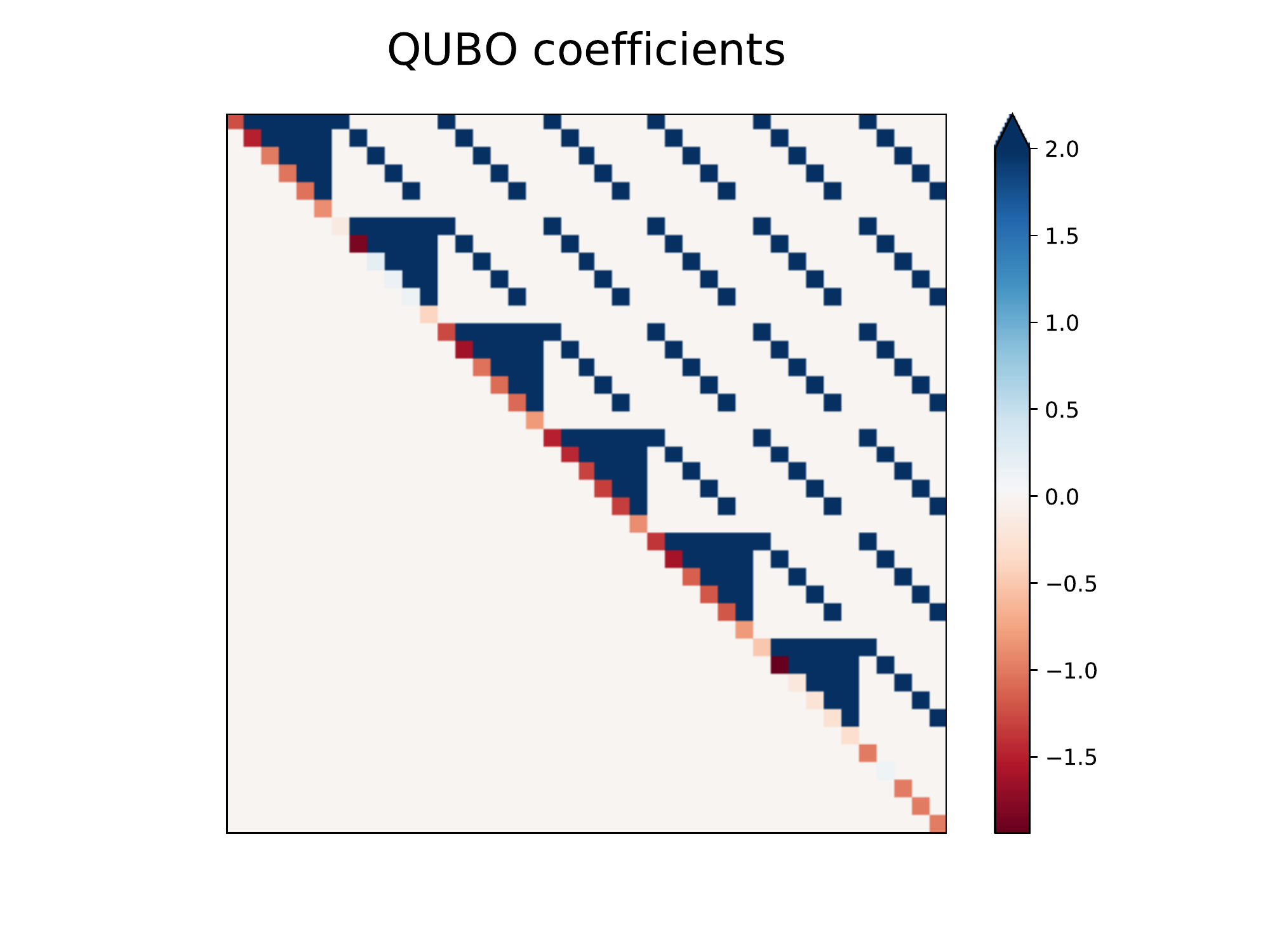}
    \caption{}
    \label{fig:QUBO-coefficients}
  \end{subfigure}
  \caption{
  Left, the graph $\tilde G$ for the example persistence diagrams of \cref{fig:dgm-example}.
  Blue circles and triangles correspond to points from $D_1$ and their diagonal pairings; red squares and triangles to points and diagonals pairings in $D_2$.
  The weights on the edges (not drawn) are the distance between the representative points.
  The bold edges are those chosen in a MCMM.
  For this example, most edges connected points to their diagonals. Only one of the long-lived generators of the torus (point 5) is matched to the point representing the robust generator of the annulus (point 1).
  At right, the graph representing the coefficients of the constructed QUBO.
  Each row represents one of the variables $x_{u,v}$, and color in the upper half plane for entry at $(x_{u,v}, x_{u'v'})$ represents the the coefficient of the monomial $x_{u,v}x_{u'v'}$.}
  \label{fig:wass-example}
\end{figure}



\section{Equivalence}
\label{sec:Equivalence}

In this section, we show that the QUBO built in \cref{sec:method} is minimized exactly when the binary variables can be read off as an MCMM for $\tilde G$.

\begin{thm}
  \label{thm:MaxMatchingMinimizesH}
Let $\mathbf{x} \in Z$.
Then $\mathbf{x}$  is a solution which minimizes $H$ if and only if
$\x \subseteq E$
is a MCMM of $\widetilde G$.
\end{thm}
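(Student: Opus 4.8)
The plan is to split $H$ into its penalty part $F_U+F_V$ and its cost part $F_c$, to identify the zero set of the penalty with the maximal matchings of $\widetilde G$, and then to use the threshold $B>B^*$ to force every global minimizer of $H$ into that zero set, where $H$ already coincides with the matching cost $C$.

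\emph{Step 1 (the penalty detects maximal matchings).} For $\mathbf{x}\subseteq E$ let $d_w$ denote the number of edges of $\mathbf{x}$ incident to a vertex $w$. Then $F_U(\mathbf{x})+F_V(\mathbf{x})=B\sum_w (1-d_w)^2$, the sum ranging over the off-diagonal vertices $X\subseteq U$ and $Y\subseteq V$; this is a nonnegative sum of integer squares and vanishes iff $d_w=1$ for every off-diagonal $w$. Since each diagonal vertex $\Delta_z$ has degree $1$ in $\widetilde G$, it automatically satisfies $d_{\Delta_z}\le 1$, so $F_U+F_V=0$ forces every vertex to have degree at most $1$, i.e.\ $\mathbf{x}$ is a matching; it is maximal because every edge of $\widetilde G$ has an off-diagonal endpoint, which is already covered, so no edge can be added. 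The converse is immediate from \cref{lem:UnmatchedVertexInTildeG}. Together with the fact (noted just after \cref{eq:H}) that $H=C$ on maximal matchings, the theorem reduces to showing that the $H$-minimizers over $Z$ are precisely the minimum-cost maximal matchings.

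\emph{Step 2 (minimizers have zero penalty).} I would show by local moves that any $\mathbf{x}$ with $F_U+F_V>0$ fails to minimize $H$. First note that every edge weight is strictly positive: off-diagonal/off-diagonal weights by the genericity assumption, and each edge $(z,\Delta_z)$ because $z$ lies strictly above $\Delta$. If some off-diagonal vertex $w$ is over-covered ($d_w\ge 2$), delete an incident edge $e=(w,w')$: the penalty at $w$ drops by $B(2d_w-3)\ge B$, the only possible penalty increase is at $w'$ and is at most $B$ (and $0$ if $w'$ is diagonal), while $F_c$ falls by $\omega(e)>0$; hence $H$ strictly decreases. Once no vertex is over-covered, $\mathbf{x}$ is a matching; if an off-diagonal $w$ is left uncovered ($d_w=0$), then its diagonal partner $\Delta_w$, whose only neighbor in $\widetilde G$ is $w$, is also uncovered, so adding $(w,\Delta_w)$ lowers the penalty by exactly $B$ while raising $F_c$ by $\omega(w,\Delta_w)\le B^*<B$, so again $H$ strictly decreases. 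This last step is exactly where the choice $B>B^*$ of \cref{eq:DefB} is used.

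\emph{Step 3 (assembly and the main obstacle).} Steps 1--2 place every global minimizer in the zero-penalty set, i.e.\ among the maximal matchings, where $H=C$; hence $\min_Z H=\min\{C(\mathbf{x})\mid \mathbf{x}\text{ maximal}\}$, so an $H$-minimizer is a maximal matching of minimum cost (a MCMM), and conversely any MCMM attains $\min_Z H$. The main obstacle is the over-coverage case of Step 2: one must verify that deleting an edge at an over-covered vertex never raises the penalty by more than it saves in $F_c$, whatever the status of the far endpoint $w'$. It is this localized accounting --- one penalty unit $B$ weighed against one edge weight $\omega(e)\le B^*$ --- that makes the coarse threshold $B>B^*$ sufficient, rather than a bound scaling with $n+m$.
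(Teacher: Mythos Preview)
Your proof is correct and follows essentially the same strategy as the paper: reduce an arbitrary $\mathbf{x}$ to a maximal matching by single-edge deletions (over-covered vertices) and additions (uncovered off-diagonal vertices), tracking the change in $F_c$, $F_U$, $F_V$ exactly as in the paper's \cref{lem:CalculationRules}, \cref{lem:removeOneEdge}, and \cref{lem:addOneEdge}. Your framing via the zero set of $F_U+F_V$ being exactly the maximal matchings is a clean way to package what the paper establishes piecewise, and your observation that only the under-coverage move actually needs $B>B^*$ (the over-coverage move relies merely on $\omega(e)>0$) is sharper than the paper makes explicit, but the underlying computations and logic are the same.
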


Before we can prove the theorem, we will need a few technical lemmas.

\subsection{A lemma for calculations}
\begin{lem}
    \label{lem:CalculationRules}
Assume $\y, \tilde \y \in Z$ are equal except for a single entry $\y_{u,v} \neq \tilde \y_{u,v}$.
Let $(a, T) \in \{ (u,U), (v,V) \}$.
Then we have the following table for calculations.
\begin{center}
\begin{tabular}{|cc|c|c||c|c|}
    \hline
    $\y_{u,v}$ & $\tilde \y_{u,v}$ &
        $a$ &
        No.~neighbors of $a$ in $\y$ &
        Result &  \\ \hline \hline
     1 & 0 & &  & $F_c(\tilde \y) = F_c(\y) - \omega(u,v)$ & (a)\\
     0 & 1 &&   & $F_c(\tilde \y) = F_c(\y) + \omega(u,v)$ & (b)\\\hline \hline
    & & $ a \in \Delta \subseteq T$ & & $F_T(\tilde \y) = F_T(\y)$  & (c)\\ \hline
    0 & 1 & $ a \not \in \Delta \subseteq T$ & 0 & $F_T(\tilde \y) = F_T(\y) - B$ & (d)\\
    0 & 1 & $ a \not \in \Delta \subseteq T$ & $\geq 1$ & $F_T(\tilde \y) \geq F_T(\y) + B$ & (e)\\ \hline
    1 & 0 & $ a \not \in \Delta \subseteq T$ & $1$ & $F_T(\tilde \y) = F_T(\y) + B$ & (f)\\
    1 & 0 & $ a \not \in \Delta \subseteq T$ & $\geq 2$ & $F_T(\tilde \y) \leq F_T(\y) - B$ & (g)\\ \hline
\end{tabular}
\end{center}

\end{lem}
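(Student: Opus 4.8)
The plan is to treat the three summands of $H=F_c+F_U+F_V$ separately and to track how each responds to toggling the single disputed coordinate $\y_{u,v}$; since $\y$ and $\tilde\y$ agree everywhere else, every change is localized and the whole table reduces to a few one-variable computations. Rows (a) and (b) are immediate: the cost term $F_c(\x)=\sum_{(u,v)\in E}\omega(u,v)x_{u,v}$ is linear, so it sees $\y_{u,v}$ only through the monomial $\omega(u,v)\y_{u,v}$. Flipping this bit from $1$ to $0$ deletes that monomial, giving $F_c(\tilde\y)=F_c(\y)-\omega(u,v)$, and flipping from $0$ to $1$ adds it; no other term of $F_c$ is affected.

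The key structural observation for the remaining rows is the locality of $F_U$ and $F_V$. The variable $x_{u,v}$ occurs in the summand of $F_U$ indexed by $u$ precisely when $u\in X$ (that is, $u\notin\Delta$), and in no summand of $F_U$ when $u\in\Delta_Y$; symmetrically, $x_{u,v}$ touches the $v$-summand of $F_V$ only when $v\notin\Delta$. Now fix $(a,T)\in\{(u,U),(v,V)\}$. If $a\in\Delta\subseteq T$, then $x_{u,v}$ does not appear anywhere in $F_T$, so $F_T(\tilde\y)=F_T(\y)$, which is row (c). Otherwise $a\notin\Delta$ contributes exactly one penalty summand $B(1-d_a)^2$ to $F_T$, where $d_a$ is the number of selected edges incident to $a$, and toggling $\y_{u,v}$ changes only this one summand.

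It then remains to compute how $B(1-d_a)^2$ changes as $d_a$ moves by one. Writing $k$ for the number of neighbors of $a$ in $\y$, a turn-on ($0\to1$) sends $d_a$ from $k$ to $k+1$, producing the increment $B\bigl[(1-(k+1))^2-(1-k)^2\bigr]=B(2k-1)$; this equals $-B$ when $k=0$ (row (d)) and is at least $B$ when $k\geq1$ (row (e)). A turn-off ($1\to0$) sends $d_a$ from $k$ to $k-1$, producing $B\bigl[(1-(k-1))^2-(1-k)^2\bigr]=B(3-2k)$; this equals $+B$ when $k=1$ (row (f)) and is at most $-B$ when $k\geq2$ (row (g)). Combining these with rows (a)--(c) yields the full table.

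I do not expect a genuine obstacle: the content is elementary algebra once locality is in hand. The only point needing care is the bookkeeping in that step---verifying that $x_{u,v}$ enters just the single summand indexed by the off-diagonal endpoint of its edge and none of the others---since this is exactly what lets me ignore $F_c$ and the irrelevant one of $F_U,F_V$ while analyzing a given row. Getting the diagonal-versus-off-diagonal split right, so that the vanishing case (c) is correctly separated from cases (d)--(g), is the one place where a careless argument could slip.
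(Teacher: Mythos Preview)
Your proposal is correct and follows essentially the same approach as the paper: both immediately dispatch rows (a)--(b) by linearity of $F_c$, then use the locality observation that $x_{u,v}$ appears in $F_T$ only through the single summand $B(1-d_a)^2$ when $a\notin\Delta$, and finally compute the one-variable increments $B(2k-1)$ and $B(3-2k)$ exactly as the paper does. The organization and the algebra match the paper's proof almost line for line.
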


\begin{proof}
Rows (a) and (b) are immediate from the fact that $\tilde \y$ and $\y$ differ by exactly one entry.
For the remainder of the rows, WLOG, we will assume $(a,T) = (v,V)$ as the case for $(u,U)$ is symmetric.

If $v \in \Delta_X \subseteq V$, then because the sum in $F_V$ ignores the edges adjacent to $\Delta$, there is no difference between $F_V(\tilde \y)$ and $F_V(\y)$, which proves row (c).
Thus, assume $v \not \in \Delta \subseteq V$.
If $\y_{u,v} = 0$ and $\tilde \y_{u,v} = 1$, this constitutes adding an edge.
If $v$ had no adjacent edges in $\y$, then
\begin{equation*}
1-\sum_{\substack{u' \in U \\ (u',v) \in E}} y_{u',v} = 1,\qquad \text{ and}\qquad
1-\sum_{\substack{u' \in U \\ (u',v) \in E}} \tilde y_{u',v} = 0,
\end{equation*}
which gives row (d).
If $w$ has $k \geq 1$ adjacent edges in $\y$, then it has $k+1$ adjacent edges in $\tilde \y$.
So
\begin{align*}
    F_V(\tilde \y) & = F_V(\y) - (1-k)^2 B + (1-(k+1))^2 B \\
    & = F_V(\y) + (2k-1))B\\
    & \geq F_V(\y) + B,
\end{align*}
which proves row (e).

Assume  that, still, $v \not \in \Delta \subseteq V$, but ve are instead removing an edge, so $\y_{u,v} = 1$ and $\tilde \y_{u,v} = 0$.
If $v$ has exactly one neighbor in $\y$, so none in $\tilde y$,
\begin{equation*}
1-\sum_{\substack{u' \in U \\ (u',v) \in E}} y_{u',v} = 0, \text{ and}\qquad
1-\sum_{\substack{u' \in U \\ (u',v) \in E}} \tilde y_{u',v} = 1,
\end{equation*}
thus $F_V(\tilde \y) = F_V(\y) + B$ (row (f)).
On the other hand, if $v$ has $k \geq 2$ adjacent edges in $\y$, then
\begin{align*}
    F_V(\tilde \y) & = F_V(\y) - (1-k)^2 B + (1-(k-1))^2 B \\
    & = F_V(\y) + (3-2k)B\\
    & \leq F_V(\y) -B,
\end{align*}
which is needed for row (g).
\end{proof}

\subsection{From non-matching to matching}
The next  two results, \cref{lem:removeOneEdge} and \cref{prop:NonMatchingVsMatching}, show that a given $\x$, which is not a matching, can be turned into a subset $\y$, which is a matching and has a strictly lower value of $H$.

\begin{lem}
    \label{lem:removeOneEdge}
If $\x \subseteq E$ is not a matching, and $\tilde \x$ differs from $\x$ by only removing an edge of $\x$ adjacent to at least one other edge in $\x$, then $H(\tilde \x) < H(\x)$.
\end{lem}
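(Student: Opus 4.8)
The plan is to apply the calculation rules of \cref{lem:CalculationRules} directly, since $\tilde\x$ and $\x$ differ in exactly the single entry indexed by the removed edge $(u,v)$, with $\x_{u,v} = 1$ and $\tilde\x_{u,v} = 0$. Writing $\Delta H = H(\tilde\x) - H(\x) = \Delta F_c + \Delta F_U + \Delta F_V$, row (a) of the table immediately gives $\Delta F_c = -\omega(u,v)$, and this is \emph{strictly} negative: under genericity every edge weight is positive, since off-diagonal--to--off-diagonal weights satisfy $\|a-b\|>0$ while an off-diagonal--to--diagonal weight equals $\left(\frac{d-c}{2^{1-1/q}}\right)^p > 0$ because $d > c$. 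So the cost term alone pushes $H$ strictly down, and the whole task reduces to showing that the two penalty terms together raise $H$ by at most $\omega(u,v)$; in fact I will show they raise it by at most $0$.

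First I would pin down which endpoint of $(u,v)$ carries the adjacency. Two edges are adjacent only if they share a vertex, and in $\widetilde G$ every diagonal vertex $\Delta_z$ has degree exactly $1$; hence any vertex shared by two edges of $\x$ must be off-diagonal. Since $(u,v)$ is adjacent to another edge of $\x$, at least one of its endpoints is an off-diagonal vertex with $\geq 2$ neighbors in $\x$. For that endpoint the relevant penalty term changes by row (g), i.e.\ by at most $-B$.

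Then I would bound the other endpoint. Because $F_U$ penalizes only off-diagonal vertices of $U$ and $F_V$ only off-diagonal vertices of $V$, the left endpoint $u$ affects $\Delta F_U$ alone and the right endpoint $v$ affects $\Delta F_V$ alone, so the two contributions decouple. For the remaining endpoint there are exactly three possibilities: diagonal, giving change $0$ by row (c); off-diagonal with one neighbor, giving $+B$ by row (f); or off-diagonal with $\geq 2$ neighbors, giving $\leq -B$ by row (g). In every case this endpoint contributes at most $+B$, so combining with the $\leq -B$ from the degree-$\geq 2$ endpoint yields $\Delta F_U + \Delta F_V \leq -B + B = 0$.

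Putting the pieces together gives $\Delta H = -\omega(u,v) + (\Delta F_U + \Delta F_V) \leq -\omega(u,v) < 0$, which is the claim. The one place that needs care---the real crux---is the worst case where one endpoint has exactly one neighbor (donating $+B$) while the degree-$\geq 2$ endpoint donates exactly $-B$: here the penalty terms cancel precisely, and strictness is rescued \emph{solely} by the strictly positive cost term. This is exactly why the assumptions matter: genericity guarantees $\omega(u,v) > 0$ so the residual term is nonzero, and the choice $B > B^*$ (rather than $B \geq B^*$) will be what prevents the penalty swing from strictly exceeding the cost in the later propositions that build on this lemma.
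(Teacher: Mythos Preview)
Your argument is correct and is essentially the paper's own proof: both invoke \cref{lem:CalculationRules}(a) for the cost drop and then split on the status of each endpoint of the removed edge, using rows (c), (f), (g) to bound the penalty change. The only cosmetic difference is packaging---the paper fixes WLOG that $v$ has $\geq 2$ neighbors and then enumerates three cases for $u$, whereas you bound each endpoint's contribution separately and combine---but the content and the use of genericity for $\omega(u,v)>0$ in the tight case are identical.
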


\begin{proof}
Assume $\x$ is not a matching, and without loss of generality, assume that node $v \in V$ has two or more neighbors in $\x$.
Note that by assumption, this means that $v \not \in \Delta_X \subset V$, so $v \in Y \subset V$.
Let $\tilde \x$ differ from $\x$ by exactly one edge: $y_{u,v} = 1$ and $\tilde y_{u,v} = 0$.

First, using \cref{lem:CalculationRules}(a), we know that $F_c(\tilde \x) = F_c(\x) - \omega(u,v)$, so we need check only cases where $F_U$ and $F_V$ differ.
If $u = \Delta_v$, then using \cref{lem:CalculationRules} rows (c) and (g), we have
\begin{equation*}
H(\tilde \x) \leq
    \big(F_c(\x) - \omega(u,v)\big)
    + \big(F_U(\x)\big)
    + \big(F_V(\x) - B\big)
    < H(\x)
\end{equation*}
as $B > \omega(u,v)$.
If $u \in X \subseteq U$ has exactly one adjacent edge in $\x$,
\begin{equation*}
H(\tilde \x) \leq
    \big(F_c(\x) - \omega(u,v)\big)
    + \big(F_U(\x) + B\big)
    + \big(F_V(\x) - B\big)
    < H(\x)
\end{equation*}
as $\omega(u,v) > 0$.
Finally, if $u \in X \subseteq U$ has at least $ 2$ adjacent edges in $\x$, then by \cref{lem:CalculationRules}(g),
\begin{equation*}
H(\tilde \x) \leq
    \big(F_c(\x) - \omega(u,v)\big)
    + \big(F_U(\x) - B\big)
    + \big(F_V(\x) - B\big)
    < H(\x).
\end{equation*}

\end{proof}


\begin{prop}
\label{prop:NonMatchingVsMatching}
If $\x \subseteq E$ is not a matching, then there exists $\y \subset \x$ which is a matching such that $H(\x) > H(\y)$.
\end{prop}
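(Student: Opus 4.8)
The plan is to establish the claim by a finite descent on the number of edges, applying \cref{lem:removeOneEdge} repeatedly to strip away conflicting edges one at a time while strictly decreasing $H$. Formally I would argue by induction on $k = |\x|$, proving the slightly sharper statement that any non-matching $\x$ admits a matching $\y \subsetneq \x$ with $H(\y) < H(\x)$. Any $\x$ with fewer than two edges is automatically a matching, so the hypothesis is vacuous for $k < 2$ and the base case is trivial.

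The first thing to check is that the hypothesis of \cref{lem:removeOneEdge} can always be met whenever $\x$ is not a matching. If $\x$ is not a matching, then by definition some vertex $w$ is incident to at least two distinct edges $e_1, e_2 \in \x$. Since $e_1$ and $e_2$ share the endpoint $w$, each is adjacent to the other, so $e_1$ is an edge of $\x$ adjacent to at least one other edge of $\x$. Thus there is always a legitimate edge to remove. For the inductive step, let $\x$ be a non-matching with $|\x| = k \geq 2$, choose such an edge $e$, and set $\tilde \x = \x \setminus \{e\}$. By \cref{lem:removeOneEdge} we have $H(\tilde \x) < H(\x)$, while $|\tilde \x| = k-1$. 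If $\tilde \x$ is already a matching, we take $\y = \tilde \x$, which satisfies $\y \subsetneq \x$ and $H(\y) < H(\x)$. If $\tilde \x$ is still not a matching, the induction hypothesis applied to $\tilde \x$ yields a matching $\y \subseteq \tilde \x$ with $H(\y) < H(\tilde \x)$; chaining the containments gives $\y \subseteq \tilde \x \subsetneq \x$, hence $\y \subsetneq \x$, and transitivity of the strict inequalities gives $H(\y) < H(\tilde \x) < H(\x)$.

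I do not anticipate any real obstacle here, as the substantive work has already been done in \cref{lem:CalculationRules} and \cref{lem:removeOneEdge}. The only points that require a little care are verifying that a non-matching always contains a removable edge (which reduces to the definition of a matching, as above) and ensuring that the successive containments compose to a proper inclusion $\y \subsetneq \x$ and that the process terminates --- both of which follow immediately from the fact that $|\x|$ is a finite nonnegative integer that strictly decreases at each step.
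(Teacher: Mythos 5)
Your proof is correct and follows essentially the same route as the paper's: repeatedly invoke \cref{lem:removeOneEdge} to delete an edge incident to an over-matched vertex, strictly decreasing $H$ at each step, until a matching remains. The paper states this informally as an iterative removal process; your version merely formalizes the same idea as an induction on $|\x|$ and explicitly checks that a non-matching always contains a removable edge, which is a harmless (and slightly more careful) presentation of the identical argument.
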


\begin{proof}
  If $\x$ is not a matching, then there must be a vertex adjacent to more than one edge in $\x$.
  By \cref{lem:removeOneEdge}, removing this edge from $\x$ decreases the value of $H$.
  So, given any $\x \subseteq E$ that is not a matching, we can remove problematic edges one at a time until we have a matching $\y$, and this sequence is monotone decreasing in $H$.
\end{proof}

\subsection{From non-maximal matching to maximal matching}
The next two results, \cref{lem:addOneEdge} and \cref{prop:MatchingVsMaxlMatching}, show that if you have a non-maximal matching, any sequence of adding edges while maintaining the matching property decreases the $H$ value.

\begin{lem}
    \label{lem:addOneEdge}
If $\y \subseteq E$ is a matching, and $\tilde \y$ differs from $\y$ by only adding an edge that is adjacent to no other edges of $\y$, then $H(\tilde \y) \leq H(\y)$.
\end{lem}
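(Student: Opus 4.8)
The plan is to treat this as a direct computation with \cref{lem:CalculationRules}, running exactly parallel to the proof of \cref{lem:removeOneEdge} but in the opposite direction. First I would extract the key consequence of the hypotheses: since $\y$ is a matching and the added edge $(u,v)$ is adjacent to no other edge of $\y$, both endpoints $u$ and $v$ must have exactly zero neighbors in $\y$. Passing from $\y$ to $\tilde \y$ thus flips a single coordinate $\y_{u,v}=0 \mapsto \tilde\y_{u,v}=1$, which is precisely the setting of rows (b), (c), and (d) of \cref{lem:CalculationRules}.

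Next I would apply \cref{lem:CalculationRules}(b) to get $F_c(\tilde\y)=F_c(\y)+\omega(u,v)$, so the cost term rises by $\omega(u,v)$, and then carry out a short case analysis on the types of the two endpoints. The structural fact that makes everything work is that in $\widetilde G$ no edge joins two diagonal vertices: the only diagonal edges have the form $(z,\Delta_z)$, each of which carries an off-diagonal endpoint. Hence every added edge has at least one off-diagonal endpoint, so at least one of $F_U,F_V$ strictly decreases. Concretely, if $u,v$ are both off-diagonal, rule (d) applies to each (each has zero neighbors), giving $F_U(\tilde\y)=F_U(\y)-B$ and $F_V(\tilde\y)=F_V(\y)-B$; if the edge is of the form $(z,\Delta_z)$, rule (c) leaves the penalty on the diagonal side unchanged while rule (d) drops the off-diagonal side by $B$.

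Assembling these, the net change in $H$ is $\omega(u,v)-2B$ in the two-off-diagonal case and $\omega(u,v)-B$ in the one-diagonal case. In either case the definition of $B$ in \cref{eq:DefB}, namely $B>B^*\geq \omega(u,v)$, forces the net change to be negative, so in fact $H(\tilde\y)<H(\y)$, which is stronger than the stated $H(\tilde\y)\leq H(\y)$.

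I do not expect a genuine obstacle; the content is bookkeeping against the calculation table. The single point that requires care, and the real reason the inequality has the correct sign, is verifying that an added edge can never have both endpoints on the diagonal. If it could, rule (c) would apply on both sides, leaving $F_U$ and $F_V$ unchanged while $F_c$ increases by $\omega(u,v)$, and the lemma would fail; ruling this out from the edge structure of $\widetilde G$ is the crux.
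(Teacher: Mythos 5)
Your proposal is correct and follows essentially the same route as the paper: apply \cref{lem:CalculationRules}(b) for the cost term, then split into the two cases (edge of the form $(z,\Delta_z)$ versus both endpoints off-diagonal) and use rows (c) and (d) to conclude a net change of $\omega(u,v)-B$ or $\omega(u,v)-2B$, both negative by \cref{eq:DefB}. Your explicit observation that no edge of $\widetilde G$ joins two diagonal vertices is the (implicit) reason the paper's case analysis is exhaustive, and like the paper you actually obtain the strict inequality $H(\tilde\y)<H(\y)$.
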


\begin{proof}
If $\y$ is a matching which is not maximal, then assume $(u,v) \in E$ exists with $u$ and $v$ adjacent to no edges in $\y$.
As this operation constitutes adding an edge, $F_c(\tilde \y) = F_c(\y) + \omega(u,v)$ by \cref{lem:CalculationRules}(b).

First, assume one of the vertices represents the diagonal; WLOG $v = \Delta_x$.
Then by \cref{lem:CalculationRules} rows (c) and (d),
\begin{equation*}
    F(\tilde \y) =
    \big( F_c(\y) + \omega(u,v) \big)
    + \big(F_U(\y) - B \big)
    +\big( F_V(\y)\big) < F(\y)
\end{equation*}
because $B > \omega(u,v)$.
If neither side represents the diagonal, then
\begin{equation*}
    F(\tilde \y) =
    \big( F_c(\y) + \omega(u,v) \big)
    + \big(F_U(\y) - B \big)
    +\big( F_V(\y) -B \big) < F(\y).
\end{equation*}
\end{proof}

%

\begin{prop}
\label{prop:MatchingVsMaxlMatching}
If $\y\subseteq E$ is a matching which is not maximal, then for some maximal matching $\z \supset \y$, $H(\y) > H(\z)$.
\end{prop}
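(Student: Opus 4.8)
The plan is to mirror the structure of the proof of \cref{prop:NonMatchingVsMatching}, but now building \emph{up} a matching by repeatedly adding edges rather than tearing one down by removing them. The engine is \cref{lem:addOneEdge}, whose proof in fact establishes the strict inequality $H(\tilde\y) < H(\y)$ whenever a single edge adjacent to no current edge is inserted into a matching $\y$; it is this strict version that I would invoke.

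First I would use non-maximality to guarantee that an edge can be added at all. By definition, $\y$ not maximal means there is a matching $\y' \supsetneq \y$. Picking any edge $e \in \y' \setminus \y$, the fact that $\y'$ is a matching forces both endpoints of $e$ to have degree exactly one in $\y'$, so neither endpoint meets any edge of $\y \subseteq \y'$. Hence $e$ is adjacent to no edge of $\y$, and $\y \cup \{e\}$ is again a matching. This is precisely the hypothesis of \cref{lem:addOneEdge}, so adding $e$ strictly decreases $H$.

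Next I would iterate. Starting from $\y$, I repeatedly choose an addable edge (one whose insertion keeps the collection a matching, equivalently one adjacent to no currently-chosen edge) and add it; each such step strictly lowers $H$ by the argument above. Because $E$ is finite, this process terminates after finitely many additions, and it can only terminate at a matching $\z$ to which no further edge can be added, i.e.\ a maximal matching with $\z \supseteq \y$. Since $\y$ is not maximal, at least one edge was added, so composing the strict decreases yields $H(\y) > H(\z)$, as claimed.

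The only point requiring care is the bookkeeping that the hypothesis of \cref{lem:addOneEdge} is met at \emph{every} step, not merely the first: at each stage the current collection is a matching, and the definition of ``addable while remaining a matching'' is exactly the condition ``adjacent to no edge already present,'' so the lemma applies verbatim each time. The remaining subtlety is purely notational, namely that the statement of \cref{lem:addOneEdge} is phrased with $\le$, so I would explicitly note that its proof delivers the strict inequality needed to conclude $H(\y) > H(\z)$.
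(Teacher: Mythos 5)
Your argument is essentially identical to the paper's: both find an addable edge whose endpoints are free, invoke \cref{lem:addOneEdge}, and iterate until maximality, yielding a monotone decreasing sequence in $H$. Your additional care in justifying that a non-maximal matching admits an addable edge, and in noting that the proof of \cref{lem:addOneEdge} actually delivers the strict inequality despite the $\leq$ in its statement, only makes the argument more complete.
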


\begin{proof}
  If $\y$ is a non-maximal matching, then we can add one edge to $\y$ for which each endpoint is not adjacent to any edges in $\y$.
  By \cref{lem:addOneEdge}, adding this edge to $\y$ decreases the value of $H$.
  So, given any non-maximal $\y$, we can add edges to $\y$ one at a time without losing the matching property until we arrive at a maximal matching $\z$.
  This is  a monotone decreasing sequence in $H$.
\end{proof}

\subsection{Proof of \cref{thm:MaxMatchingMinimizesH}}
With these tools in hand, we can turn to the proof of the main theorem.
\begin{proof}[Proof of \cref{thm:MaxMatchingMinimizesH}]
We prove the theorem by showing that any $\x \in Z$ (equivalently $\x \subseteq E$), which does not represent a MCMM has $H(\x) > H(\z)$ for any MCMM $\z \in Z$, equivalently $\z \subseteq E$.

Recall that if $\x \subseteq E$ represents a maximal matching, then $H(\x) = F_c(\x) = C(\x)$.
Of course, this means that if $\x$ were restricted to maximal matchings, $H(\x)$ would be minimized exactly when $\x$ represents a minimum cost matching.
However, the set $Z$ is larger than just maximal matchings.
So, we need to deal both with the case that $\x$ is not a matching, and that it is a matching, but is not maximal.

If $\x$ represents a matching that is not maximal, then by \cref{prop:MatchingVsMaxlMatching}, there is a maximal matching $\z$ for which $H(\x) > H(\z)$.
So, for any MCMM $\tilde \z$, $H(\x) > H(\z) \geq H(\tilde \z)$.
On the other hand, if $\x$ if not a matching, by combining \cref{prop:NonMatchingVsMatching} and \cref{prop:MatchingVsMaxlMatching}, there is a maximal matching $\z$ with $H(\x) > H(\y)$, and again we have $H(\x) > H(\z) \geq H(\tilde \z)$ for any MCMM $\tilde \z$.
%
%
\end{proof}

\begin{figure}[t!b]
  \centering
  \includegraphics[width=0.95\textwidth]{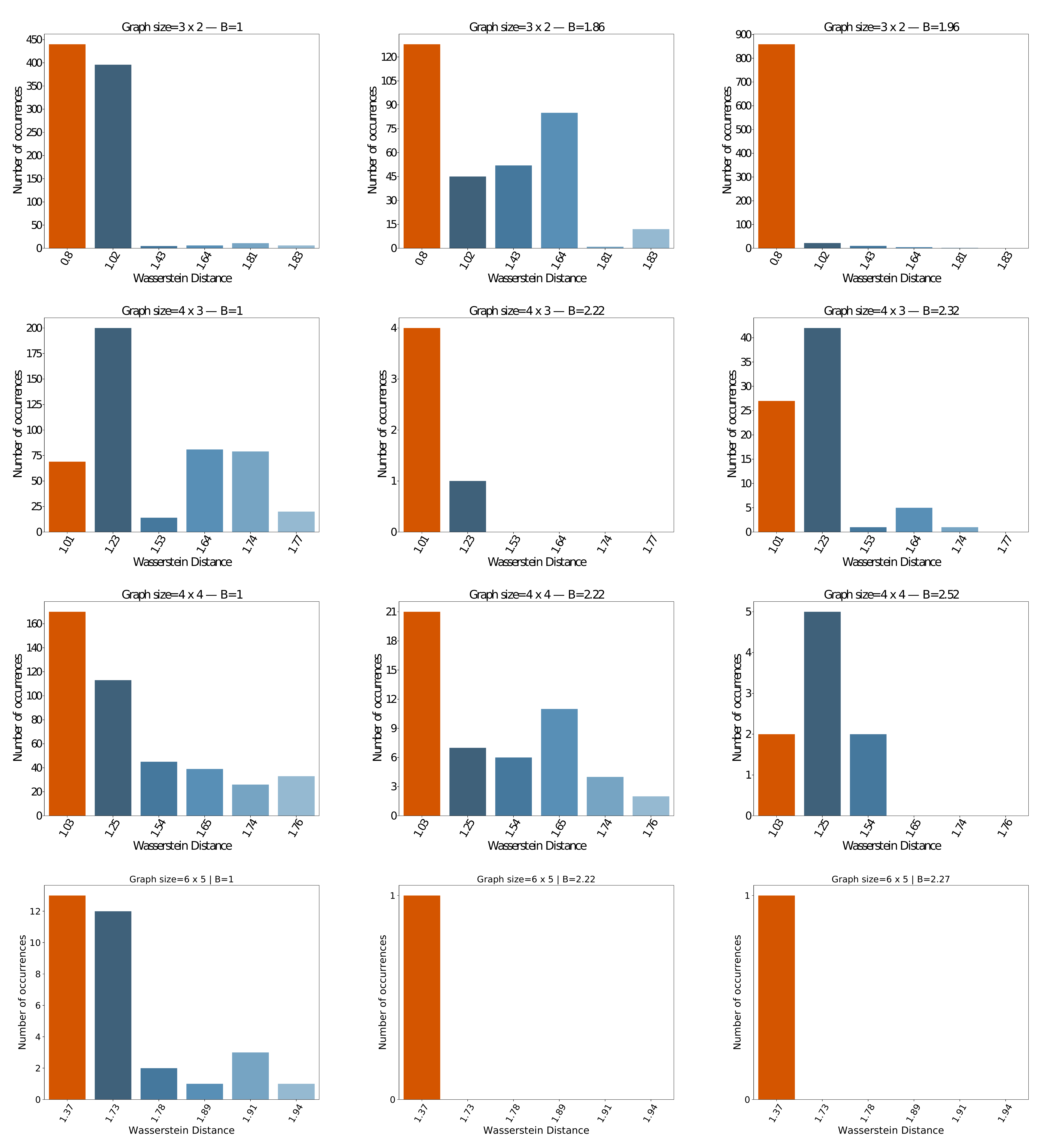}
  \caption{Squared Wasserstein distances ($x$-axis) and the number of times a sample was seen at each energy. Columns show results for different parameters, rows delineate problem size. See text for a discussion of the parameter range. Different problem sizes have different parameter ranges based on the optimal $B$. Correct values for each problem are given in \cref{tab:CorrectAnswers} and are denoted by orange bars in each histogram.}
  \label{fig:wass-dists-all}
\end{figure}

\section{Experiments}
\label{sec:experiments}

In this section we present results from computations performed on the D-Wave 2000Q~\cite{DWaveUserManual} quantum computer.
We consider squared Wasserstein distances computed from pairs of diagrams such as those in \cref{fig:dgm-example}.
We expect the QC to find the ground state corresponding to the MCMM, with higher energy states identifying the near-optimal matchings that still pair high-persistence points between diagrams.

We represent the graph $\tilde G$ as the QUBO $H$ in \cref{eq:H}.
\cref{thm:MaxMatchingMinimizesH} shows that to guarantee a MCMM in the context of the Wasserstein distance between two persistence diagrams, one must choose $B > B^* = \max_{(u,v) \in E(\widetilde{G})} \omega(u,v)$ (\cref{eq:DefB}), where it is understood that this value will change depending on the diagrams being compared.
Our experiments explore the dependence of the problem on the parameters of the Hamiltonian $B$ in \cref{eq:H} by testing output for $B = 1$, $B=B^*$ and $B=B^* + \e$ and  exploring the effect of $B$ on the low-energy solutions of $H$ returned by the QC.
We also analyze the problem over various diagram sizes.
Diagrams were obtained by taking subsets from diagrams similar to those of \cref{fig:dgm-example}, namely a diagram from an annulus and a diagram from a torus.
\begin{table}
    \centering
    \begin{tabular}{c|c}
        \textbf{Problem size } & \textbf{Wasserstein distance}\\
        $n \times m$& \textbf{$d_{2}^2(X,Y)$}\\\hline
        $3 \times 2$ & 0.802\\
        $4 \times 3$ & 1.012\\
        $4 \times 4$ & 1.03\\
        $6 \times 5$ & 1.37\\
    \end{tabular}
    \caption{Correct values of the squared Wasserstein distance for rows of \cref{fig:wass-dists-all} computed with Hera \cite{Kerber2016}.}
    \label{tab:CorrectAnswers}
\end{table}
Representative results are shown in \cref{fig:wass-dists-all}; rows correspond to diagram sizes and columns to values of $B$.
All computations are done with $p=q=2$.
The problem size is measured by the cardinality of the two diagrams used to construct $\widetilde{G}$.
The correct value of the squared Wasserstein distance for each row of \cref{fig:wass-dists-all} is given in \cref{tab:CorrectAnswers}. The values in the table match those in \cref{fig:wass-dists-all} where noted by the orange bars.
The middle column contains results from QUBOs where $B$ is set to $B^*$.
Note that only the rightmost column, therefore, fits with the assumptions of \cref{thm:MaxMatchingMinimizesH}.



\section{Conclusions}
\label{sec:conclusion}

The formulation of the Hamiltonian $H$ in \cref{eq:H} involves critical parameter choices. In particular, $B$ must be chosen with care.
It serves to balance the relative importance of the cost function $F_c$, with the constraint terms $F_U$ and $F_V$ in \cref{eq:H}.
For instance, setting $B$ too low will cause the quantum computer to violate the constraints.
In the extreme case of $B=0$, $H= F_c$ and there are no constraints on edge choices.
Hence, there is no energy penalty for setting all $x_{u,v} = 0$, resulting in $F_c = 0$. Thus, in this case, the quantum computer will favor non-maximal solutions.
Alternatively, if we set $B \gg 1$, the relative importance of the $F_c$ will be small.
While this results in samples that do not violate the constraints, the quantum computer will also fail to minimize the cost of matchings.

In our experiments, we observe that the quantum computer finds Wasserstein distances corresponding to the low-energy states correctly for small problems.
The quantum processing unit minimizes the Hamiltonians and finds the Wasserstein distance in many cases.
For graphs of size $3 \times 2$, shown in the top row in \cref{fig:wass-dists-all}, the quantum computer returns many samples at or near the MCMM edge configuration.
We can increase $B$ significantly past $B^*$ and still obtain correct samples from the quantum computer.

As the problem size increases, the ability of the quantum computer to consistently discover the low energy state across many parameters decreases.
In the second row, graphs of size $4 \times 3$, increasing $B$ quickly destabilizes the distribution of solutions.
This is possibly due to increasing necessity for long chains of physical qubits to encode high-degree nodes in the problem graph. Long chains have a higher chance of breaking, decreasing the possibility that the system will return numerous solutions at the ground state.
This corresponds to the number of {\em logical} qubits needed to formulate the problem. The node degree of $\widetilde{G}$ is such that the adjacency structure for even small problems is considered dense. Thus, computation of the Wasserstein distance becomes progressively more difficult to embed on the quantum processing unit, due to the need for a large number of physical qubits corresponding to each logical qubit.
We plan to investigate this issue in future research.


\paragraph{Acknowledgements:}
 The work of EM was supported in part by NSF grants DMS-1800446 and CMMI-1800466.
 JJB gratefully acknowledges support from the Institute for Mathematics and its Applications at the University of Minnesota.
The authors thank D-Wave Systems for computing time on their machine.

 \bibliographystyle{ieeetr}
 \bibliography{QuantumTDA_arXiv}

\end{document}